\documentclass[lettersize,journal]{IEEEtran}
\usepackage{amsmath,amsfonts}
\usepackage{array}
\usepackage[caption=false,font=normalsize,labelfont=sf,textfont=sf]{subfig}
\usepackage{textcomp}
\usepackage{stfloats}
\usepackage{url}
\usepackage{verbatim}
\usepackage{graphicx}
\hyphenation{op-tical net-works semi-conduc-tor IEEE-Xplore}
\def\BibTeX{{\rm B\kern-.05em{\sc i\kern-.025em b}\kern-.08em
    T\kern-.1667em\lower.7ex\hbox{E}\kern-.125emX}}
\usepackage{balance}

\usepackage{graphics} 

\usepackage{multirow,multicol}
\usepackage{cite}
\usepackage{booktabs}
\usepackage[colorlinks,linkcolor=blue,hyperindex,CJKbookmarks,pdftex]{hyperref}
\usepackage{array}
\usepackage{color}

\usepackage{algorithm}
\usepackage{algpseudocode}
\usepackage{amssymb}
\setcounter{algorithm}{0}

\newcommand{\col}{\mbox{col}}
\usepackage{amsthm}
\newtheorem{thm}{Theorem}[section]

\newtheorem{prop}[thm]{Proposition}

\theoremstyle{definition}

\begin{document}

\title{Learning-Based Tracking Perimeter Control for Two-region Macroscopic Traffic Dynamics}

\author{
Can~Chen, Yunping~Huang, Hongwei~Zhang, Shimin~Wang, Martin~Guay, Shu-Chien~Hsu, Renxin~Zhong
\thanks{This work was supported by the National Natural Science Foundation of China under Grant 72071214 and the Research Grants Council of Hong Kong under Grant 15211518E. (Corresponding authors: Shu-Chien~Hsu; Renxin~Zhong.)}
\thanks{Can~Chen, and Renxin~Zhong are with School of Intelligent Systems Engineering, Sun Yat-Sen University, Guangzhou, 510000, China (e-mail: can-caesar.chen@connect.polyu.hk; zhrenxin@mail.sysu.edu.cn). }
\thanks{Yunping~Huang is with Department of Civil and Environmental Engineering, Schulich School of Engineering, University of Calgary, Calgary, Alberta, AB T2L 0Y2, Canada (e-mail: yunping.huang@connect.polyu.hk).}
\thanks{Hongwei~Zhang is with School of Intelligence Science and Engineering, Harbin Institute of Technology, Shenzhen, 518055, China (e-mail: hwzhang@hit.edu.cn).}
\thanks{Shimin~Wang is with Massachusetts Institute of Technology, Cambridge, MA 02139, USA (e-mail: bellewsm@mit.edu).}
\thanks{Martin~Guay is with Smith Engineering, Queen's University, Kingston, ON K7L 3N6, Canada (e-mail: guaym@queensu.ca).}
\thanks{Shu-Chien~Hsu is with Department of Civil and Environmental Engineering, The Hong Kong Polytechnic University, Hong Kong, 999077, China (e-mail: mark.hsu@polyu.edu.hk).}
}

\markboth{Preprint submitted to arXiv, May~2025}%
{Chen \MakeLowercase{\textit{et al.}}: Learning-Based Tracking Perimeter Control for Two-region Macroscopic Traffic Dynamics}

\maketitle

\begin{abstract}
Leveraging the concept of the macroscopic fundamental diagram (MFD), perimeter control can alleviate network-level congestion by identifying critical intersections and regulating them effectively. Considering the time-varying nature of travel demand and the equilibrium of accumulation state, we extend the conventional set-point perimeter control (SPC) problem for the two-region MFD system as an optimal tracking perimeter control problem (OTPCP). Unlike the SPC schemes that stabilize the traffic dynamics to the desired equilibrium point, the proposed tracking perimeter control (TPC) scheme regulates the traffic dynamics to a desired trajectory in a differential framework. Due to the inherent network uncertainties, such as heterogeneity of traffic dynamics and demand disturbance, the system dynamics could be uncertain or even unknown. To address these issues, we propose an adaptive dynamic programming (ADP) approach to solving the OTPCP without utilizing the well-calibrated system dynamics. Numerical experiments demonstrate the effectiveness of the proposed ADP-based TPC. Compared with the SPC scheme, the proposed TPC scheme achieves a 20.01\% reduction in total travel time and a 3.15\% improvement in cumulative trip completion. Moreover, the proposed adaptive TPC approach can regulate the accumulation state under network uncertainties and demand disturbances to the desired time-varying equilibrium trajectory that aims to maximize the trip completion under a nominal demand pattern. These results validate the robustness of the adaptive TPC approach.
\end{abstract}

\begin{IEEEkeywords}
Macroscopic transport network system, adaptive dynamic programming, perimeter control, trajectory tracking, learning based control, adaptive optimal control.
\end{IEEEkeywords}

\section{Introduction}

\IEEEPARstart{R}{apid} urbanization has led to a significant increase in car usage across metropolitan cities worldwide, resulting in a surge in traffic congestion, accidents, and pollution. Managing the rapidly growing travel demand requires the efficient use of existing infrastructure through appropriate traffic control schemes. The conventional traffic control methods, such as local traffic signal control strategies, focus on link-level strategies, which may not be optimal or might not achieve the stabilization of the system for heterogeneous networks with multiple bottlenecks and heavily directional demand flows. Network-level traffic control strategies should be developed to alleviate network congestion when confronted with conditions.

The adoption of the macroscopic fundamental diagrams (MFDs) to model and regulate the traffic flow of large-scale urban networks has been extensively studied in the last decade (see \cite{huang2024comparison} and the references therein).
The MFD intuitively provides an aggregate, low-scatter relationship between the network vehicle density (veh/km) (or accumulation (veh)) and network outflow (or trip completion flow rate (veh/h)).
Under the MFD framework, a heterogeneous urban traffic network is divided into several homogeneous regions, with each admitting a well-defined MFD \cite{ji2012spatial}. Such an analytically simple and computationally efficient framework enables MFD to be a promising solution for large-scale network control and management.

The perimeter control is one of the most significant applications of MFDs. The perimeter control aims to manipulate the transfer flow at the boundaries of the region, which is a promising solution to alleviating network-scale traffic congestion. Considerable research efforts have been dedicated to devising optimal network traffic control strategies based on MFDs. Apart from previous literature on maximizing the network traffic throughput by leveraging perimeter control \cite{daganzo2007urban,geroliminis2013optimal,aalipour2018analytical}, considerable research efforts have been devoted to devising perimeter control strategies that regulate the network accumulation to the desired equilibrium, i.e., set-point perimeter control \cite{aboudolas2013perimeter,keyvan2015multiple,haddad2016adaptive}. The robust perimeter control problem of the MFD-based system was also addressed in previous studies, e.g., \cite{haddad2015robust,zhong2018robust}.

A critical assumption adopted in traffic control (including the perimeter control and signal control) is that the steady state of the system can be achieved, and the equilibria of the system can be determined. Under this assumption, the stability of fixed equilibrium points in the sense of Lyapunov is widely applied in traffic control. Considering the dynamic nature of traffic demand and supply, especially for fast time-varying cases, identification of the steady state is an extremely difficult and unclear task in practice \cite{zhong2018robust,ZHONG2018327}. Some recent studies have attempted to optimize set-points for traffic control by updating them based on real-time traffic state estimations/measurements \cite{wang2021feedback,mohajerpoor2020h}, and others by using data-driven approaches \cite{kouvelas2017enhancing} or Nash equilibrium seeking schemes \cite{kutadinata2016enhancing}.
Model predictive control (MPC) was utilized by \cite{yu2020two} to optimize the set-point of MFD dynamics and used this set-point with an iterative learning method to design traffic signal timing plans. They conducted the planning and control synchronously, which may cause the curse of dimensionality in large-scale urban networks with many intersections that are distributedly managed. However, there is a complex and unclear relationship between the network traffic performance and desired set point, with no unified or clear definition of the best set point in an ever-changing environment.

The aforementioned studies on perimeter control can be regarded as model-based traffic-responsive control, which assumes that model parameters are accurately calibrated and perfect knowledge of the network is available. Note that traffic networks are subject to various uncertainties (e.g., demand noise and model error), making these assumptions difficult and even impossible to be met. Recently, model-free methods, such as iterative learning control \cite{REN2020102618}, model-free adaptive predictive control \cite{li2022distributed}, deep reinforcement learning (RL) \cite{zhou2021model}, and adaptive dynamic programming (ADP) \cite{SU2020102628,chen2022data,chen2024iterative} have been proposed to address the problem of devising adaptive perimeter control strategies for MFD systems with unknown system dynamics. The RL and ADP bridge the gap between optimal control and adaptive control. In an off-line manner, the ADP method provides an approximate solution to the optimal control problem obtained from the Bellman optimality principle and the dynamic programming principle (i.e., the Hamilton-Jacobi-Bellman (HJB) equation).

This paper proposes a model-free ADP-based tracking perimeter control (TPC) scheme that extends the stability of a single equilibrium (or its invariant set) to a desired trajectory. Few existing studies have addressed the TPC problem for MFD-based traffic networks except \cite{haddad2017coordinated}. Based on this work, \cite{HADDAD2020133} investigated the effect of constant time delays. Local linearization around the desired equilibrium was performed in both works to simplify the controller design. Different from the existing works, we explicitly consider both state and control constraints in the solution of the TPC problem and no model linearization is required. In this paper, first, the optimal tracking perimeter control problem (OTPCP) is transformed into the minimization of a nonquadratic performance function subject to an augmented system composed of the original system and the command generator system. Then, an ADP algorithm is proposed to generate the optimal solution to the associated HJB equation without complete knowledge of the augmented dynamics.

The remainder of this paper is organized as follows: \autoref{sec:pf} proposes a novel problem formulation of the OTPCP for the two-region MFD system. \autoref{sec:apc} develops a model-free ADP approach to solving the OTPCP. Then experimental results are presented in \autoref{sec:exp}. Finally, \autoref{sec:con} concludes this paper.

\section{Optimal tracking perimeter control of a two-region MFD system}\label{sec:pf}

\subsection{Two-region MFD system dynamics}

\begin{figure}[htbp]
  \centering
  \includegraphics[width=3.1in]{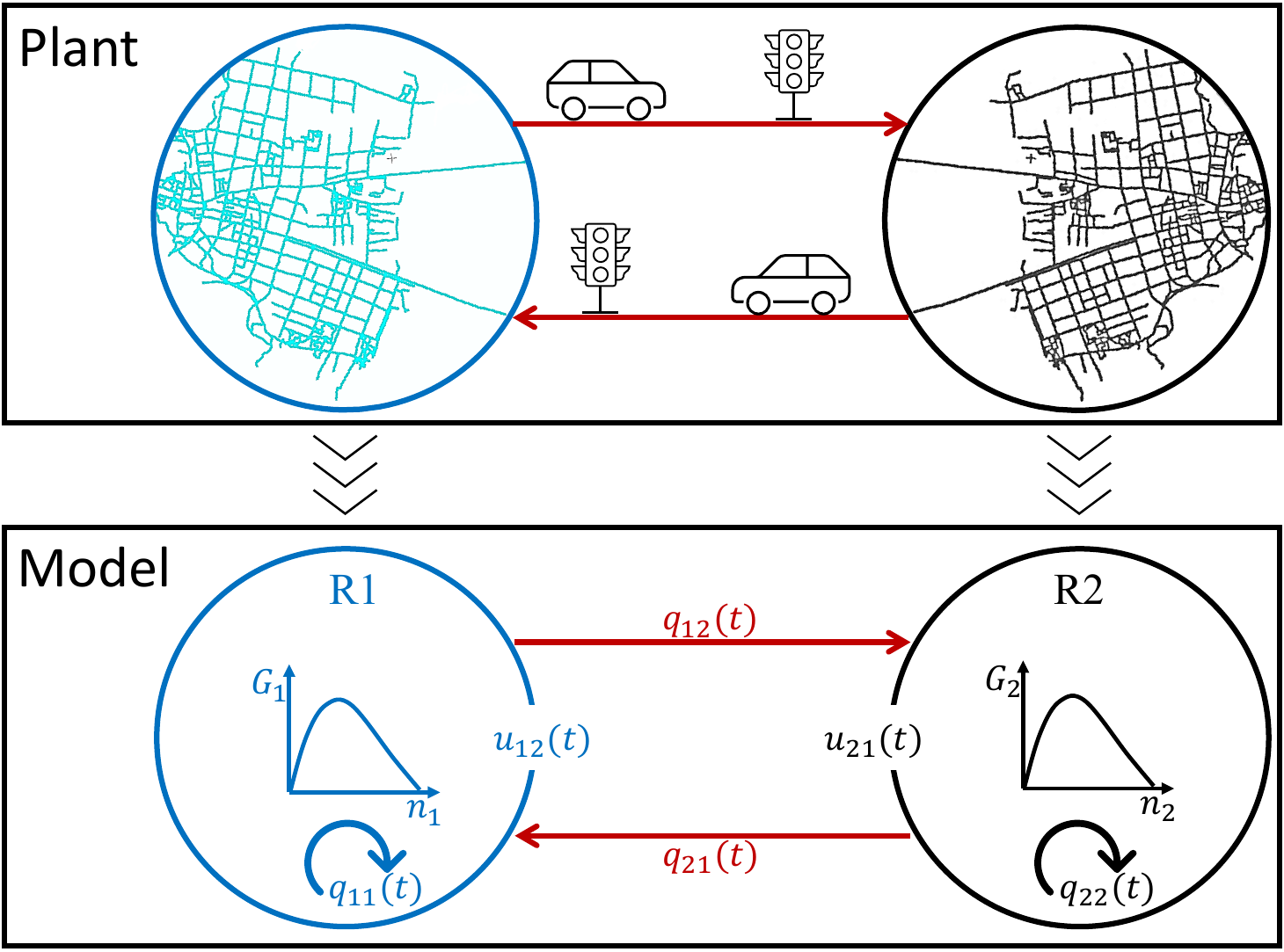}\\
  \caption{The two-region MFD system}\label{fig:mfdtopo}
\end{figure}

To begin with, we recapitulate the two-region MFD system model. An urban network with two regions shown in \autoref{fig:mfdtopo} is of great significance in investigating gating control on the periphery. Assume that the urban network is composed of two homogeneous regions that both admit well-defined MFDs. A two-region MFD system can be used to model the macroscopic traffic dynamics. The MFD is a function that depicts a nonlinear relationship between the regional accumulation $n_i(t)$ (veh) and the trip completion rate $G_i(n_i(t))$ (veh/s) at time $t$, $i=1,\ 2$. The regional accumulation $n_i(t)$ represents the number of vehicles in region $i$ with $0\leq n_i \leq n^{jam}_i$ where $n^{jam}_i$ is the jam accumulation, i.e., the maximum vehicle number in region $i$. Let $q_{ij} (t)$ (veh/s) denote the travel demand generated in region $i$ with destination to region $j$. By distinguishing whether the origin and destination of the travel demand are in the same region or not, the travel demand can be divided into endogenous and exogenous travel demand. Corresponding to the travel demand, four state variables, denoted by $n_{ij} (t)$ (veh) are identified. These state variables represent the accumulations contributed by the travel demand from region $i$ to region $j$. By definition, we have $n_i (t) = \sum_{j} n_{ij} (t)$. Meanwhile, the perimeter control variables are introduced to the system, denoted as $u_{12} (t)$ and $u_{21} (t)$ with $0\leq u_{ij} (t)\leq 1,\ i \neq j$, which are utilized to control the transfer flow between R1 and R2 on the border.
The perimeter control, essentially, is a type of gating control actualized on the boundaries, which can be implemented using the existing infrastructure such as traffic signals \cite{chen2022data,su2023hierarchical,tsitsokas2023two}.
The sending function of the transfer flow from region $i$ with destination region $j$ can be calculated by $\frac{n_{ij}(t)}{n_i(t)}G_i(n_i(t))$, and the completed transfer flow is determined by the perimeter control, i.e., $u_{ij} (t)\frac{n_{ij}(t)}{n_i(t)}G_i(n_i(t))$. On the other hand, the completed internal flow is defined as $\frac{n_{ii}(t)}{n_i(t)}G_i(n_i(t))$.

Based on flow conservation, the dynamics of the two-region MFD system can be regarded as a class of non-affine systems
\begin{equation}\label{eq:onadyn}
    \dot{n}(t)=K(n(t),u(t),q(t))
\end{equation}
where $n(t)=\col(n_{11}(t),n_{12}(t),n_{21}(t),n_{22}(t))\in \mathbb{R}^4_+$, $u(t)=\col(u_{12}(t),u_{21}(t))\in \mathbb{R}^2_+$, and $q(t)=\col(q_{11}(t),q_{12}(t),q_{21}(t),q_{22}(t))\in \mathbb{R}^4_+$ are the accumulation state of the system, the perimeter control, and the travel demand, respectively. Here $K(n,u,q)$ has the following well-known form \cite{geroliminis2013optimal}:
\begin{align}\label{eqK}
  K(n,u,q)\triangleq \left[
  \begin{array}{c}
   -\frac{n_{11}}{n_1}G_1(n_1)+\frac{n_{21}}{n_2}G_2(n_2)u_{21}+q_{11} \\
   -\frac{n_{12}}{n_1}G_1(n_1)u_{12}+q_{12} \\
   -\frac{n_{21}}{n_2}G_2(n_2)u_{21}+q_{21} \\
   -\frac{n_{22}}{n_2}G_2(n_2)+\frac{n_{12}}{n_1}G_1(n_1)u_{12}+q_{22}
  \end{array} \right]
\end{align}
subject to
\begin{equation*}
  0\leq n_{i}(t)\leq n^{jam}_{i},\ 0\leq u^{\min}_{ij}\leq u_{ij}(t)\leq u^{\max}_{ij}\leq 1
\end{equation*}

\subsection{Optimal TPC problem formulation}

For the TPC problem, the perimeter control is designed to manipulate the cross-boundary flows such that the accumulation state $n$ can track a desired trajectory. For different periods of within-day traffic (i.e., off-peak period and peak period), the desired control targets should be different and fit the dynamics of the demand pattern. As shown in \autoref{fig:consch}, during the first off-peak period (e.g., 0:00--7:00), no control is necessary as the travel demand is low. Regulation is then claimed at the onset of the congestion. During the peak period (e.g., 7:00--11:00) with high travel demand, a steady accumulation state (target equilibrium) less than but close to the critical accumulation of the MFD is desired because operating the protected region around the critical accumulation maximizes its throughput \cite{ZHONG2018327}. After the peak-period congestion dissolves, the second off-peak period (e.g., 11:00--16:00) commences, during which the demand level is medium. It is not necessary to set the target equilibrium to be close to the critical accumulation that is larger than the steady state yielded by the demand pattern. Hence, the change of the control target is desired for the second off-peak period.

It is desired to design a reference signal associated with the performance of the network traffic flows. Moreover, a reference signal, if bounded by the desired invariant set of the steady states, would be more desirable because that means the control target is more achievable, controllable, and practical.

\begin{figure}[!h]
  \centering
  \includegraphics[width=3.3in]{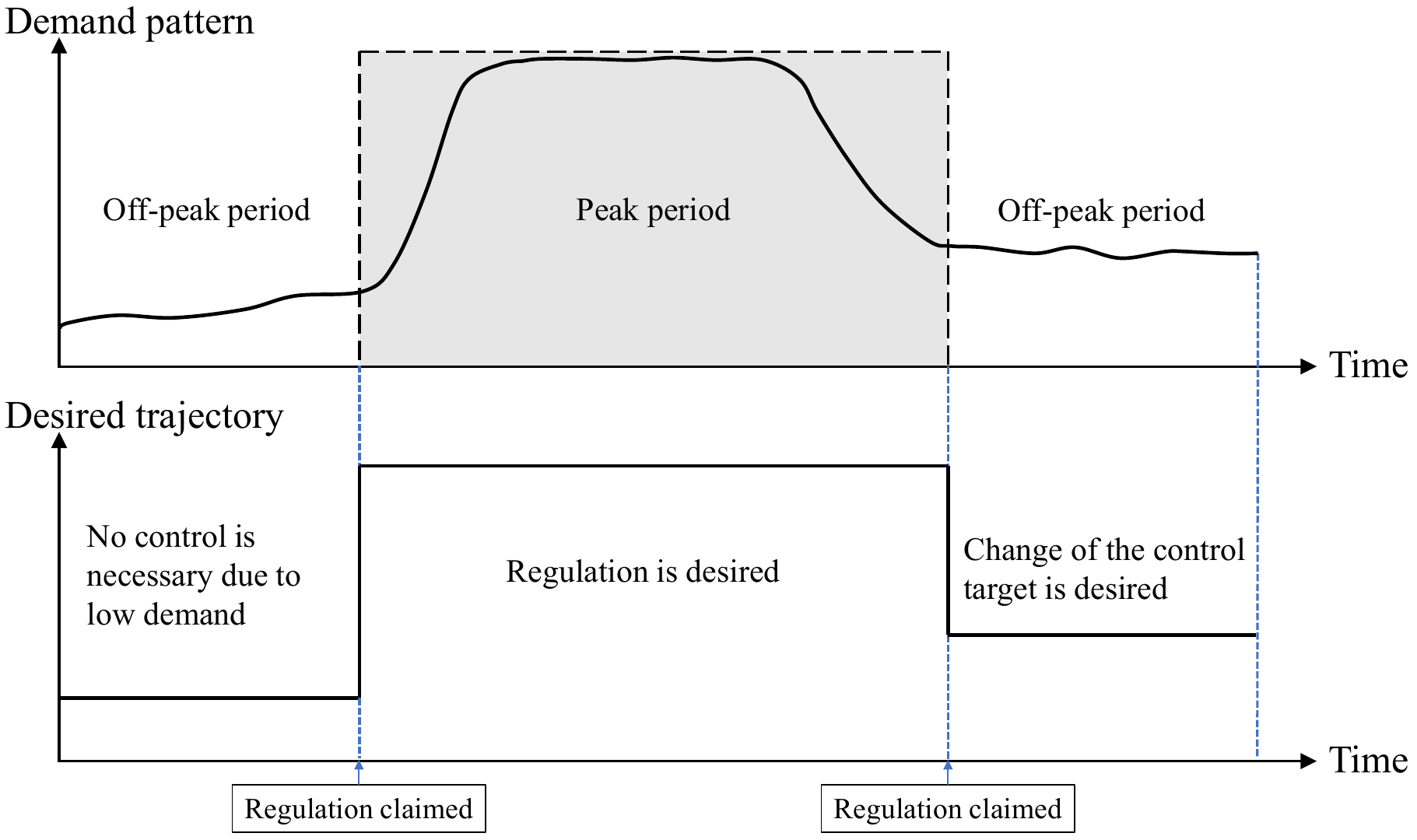}\\
  \caption{Demand pattern and desired state trajectory}\label{fig:consch}
\end{figure}

Now we present the formulation of OTPCP. Following \cite{zhong2018robust}, let $n=\col(n_{11}, n_{12}, n_{21}, n_{22})$ and $u=\col(u_{12}, u_{21})$ denote the state vector and control input vector respectively, then the two-region MFD traffic dynamics can be written in the following affine form:
\begin{equation}\label{oafsys}
  \dot{{n}}=f({n})+s({n})\cdot{u}
\end{equation}
where $f({n})$ and $s({n})$ are the drift dynamics and input dynamics given by
\begin{equation*}
    \begin{split}
        & f(n)=\left[
        \begin{array}{c}
          -\frac{n_{11}}{n_1}G_1(n_1)+q_{11} \\
          q_{12} \\
          q_{21} \\
          -\frac{n_{22}}{n_2}G_2(n_2)+q_{22}
        \end{array}
          \right] \\
        & s(n)=\left[
        \begin{array}{cc}
          0 & \frac{n_{21}}{n_2}G_2(n_2) \\
          -\frac{n_{12}}{n_1}G_1(n_1) & 0\\
          0 & -\frac{n_{21}}{n_2}G_2(n_2) \\
          \frac{n_{12}}{n_1}G_1(n_1) & 0
        \end{array}
          \right]
    \end{split}
\end{equation*}

We denote the reference trajectory by ${n}^d(t)=\col({n}^{d}_{11}(t),{n}^{d}_{12}(t),{n}^{d}_{21}(t),{n}^{d}_{22}(t))$. The trajectory ${n}^d(t)$ is assumed to be bounded and could be generated by the following Lipschitz continuous command generator dynamics \cite{zhang2017tracking}:
\begin{equation}\label{cgd}
  \dot{{n}}^d = \theta({n}^d(t))
\end{equation}
and $\theta(0)=0$.

The target of the OTPCP is to find an optimal controller ${u}^*(t)$ to make the state ${n}(t)$ can track the desired state ${n}^d(t)$. We define $e^d(t)={n}(t)-{n}^d(t)$ as the tracking error, and the tracking error dynamics are given by
\begin{equation}\label{errdyn}
\begin{split}
  \dot{e}^d & =\dot{{n}}-\dot{{n}}^d =f({n})+s({n})\cdot{u}-\theta({n}^d) \\
  & = K(e^d+{n}^d,{u},q)-\theta({n}^d)
\end{split}
\end{equation}

As reported by \cite{zhang2017tracking}, the standard solution to the optimal tracking control problem is composed of two parts: 1) the steady-state part of the control input ${u}_s(t)$ that guarantees perfect tracking of the reference trajectory, and 2) the feedback part of the control input $\mu(t)$ that stabilizes the tracking error dynamics in an optimal manner, i.e.,
\begin{equation*}
  {u}(t)={u}_s(t)+\mu(t)
\end{equation*}

Suppose that perfect information on the dynamics is available and the inverse of the input dynamics $s^{-1}({n}^d)$ exists, the steady-state part control input $\tilde{u}_s(t)$ can be obtained by
\begin{equation}\label{ssucom}
{u}_s=s^{-1}({n}^d)(\theta({n}^d)-f({n}^d))
\end{equation}
On the other hand, as reported by \cite{zhang2018near}, if $s^{-1}({n}^d)$ does not exist, ${u}_s(t)$ can be developed by
\begin{equation}\label{ssucomi}
  {u}_s=[s({n}^d)^Ts({n}^d)]^{-1}s({n}^d)^T[\theta({n}^d)-f({n}^d)]
\end{equation}

The feedback part of the control $\mu(t)$ can be obtained by minimizing the following performance function:
\begin{equation}\label{OOtc}
  V(t)=\int^{\infty}_{t} (e^d(\tau))^TQe^d(\tau) + U(\mu(\tau)) \mathrm{d}\tau
\end{equation}
where $Q$ is a symmetric positive definite matrix of proper dimension and $U(\mu)$ is a positive definite function that computes the control effort.

In this paper, we propose an augmented system for the OTPCP. Hence the OTPCP can be transformed into the minimization of a nonquadratic performance function subject to an augmented system composed of the original system and the command generator system. We define the augmented system state $N(t)=[(e^d(t))^T,\ ({n}^d(t))^T]^T$ and the augmented system as
\begin{equation}\label{augsys}
  \dot{N}(t)=F(N(t))+S(N(t))\times\mu(t)
\end{equation}
where
\begin{equation*}
\begin{split}
  & F(N)=\left[
  \begin{array}{c}
  f(e^d+{n}^d)+s(e^d+{n}^d){u}_s-\theta({n}^d) \\
  \theta({n}^d)
  \end{array}
  \right] \\
  & S(N)=\left[
  \begin{array}{c}
  s(e^d+{n}^d) \\
  0
  \end{array}
  \right]
\end{split}
\end{equation*}

For this augmented system, we introduce the following performance function.
\begin{equation}\label{npf}
  \bar{V}(N(t))=\int^{\infty}_{t} \big(\bar{Q}(N(\tau))+ \bar{U}(\mu(\tau))\big) \mathrm{d}\tau
\end{equation}
where
\begin{equation*}
\begin{split}
\bar{Q}(N(t)) & = (N(t))^T\hat{Q}N(t) \\
& =\left[
\begin{array}{c}
 e^d  \\
 {n}^d  \\
\end{array}
\right]^T\left[
\begin{array}{cc}
Q & 0 \\
0 & 0 \\
\end{array}
\right]\left[
\begin{array}{c}
e^d  \\
{n}^d  \\
\end{array}
\right]
\end{split}
\end{equation*}
and $\bar{U}(\mu)$ is a positive definite integrand function defined as
\begin{equation}\label{satU}
  \bar{U}(\mu)= 2\int^{\mu}_{0} \left(\lambda\tanh^{-1}\left(\frac{\upsilon}{\lambda}\right)\right)Rd\upsilon
\end{equation}
where $\upsilon\in \mathbb{R}^2$, $\lambda$ is the saturating bound for the actuators and without loss of generality, $R=\textnormal{diag}(\gamma_1,\gamma_2)$ is a positive semidefinite symmetric matrix. This nonquadratic performance function is used in the optimal regulation problem of constrained-input systems to deal with the input constraints. In fact, using this nonquadratic performance function, the following constraints are always satisfied, i.e., $|\mu_i(t)|\leq\lambda,\ i=1,2$.

By constructing the augmented system \eqref{augsys}, the conventional standard solution to the OTPCP is transformed into solving the optimal feedback part $\mu(N)$ for the augmented system, whereas the solution to the steady-state control $\tilde{u}_s$ has been substituted in the dynamics $F(N)$.

Based on the Bellman optimality principle, suppose $V^*$ is the optimal value function, then it satisfies the following tracking HJB equation
\begin{equation}\label{eq:iniTHJB}
\begin{split}
  H(N,\mu,V^*)= & N^T\hat{Q}N + 2\int^{\mu}_{0} \left(\lambda\tanh^{-1}\left(\frac{\upsilon}{\lambda}\right)\right)Rd\upsilon \\
  & + (\nabla V^{*}(N))^T\times (F(N)+S(N)\mu)
\end{split}
\end{equation}
where $\nabla$ is the symbol of partial derivative.

Applying the stationary condition $\partial H/\partial \mu^*=0$, the optimal control policy is given by
\begin{equation}\label{aocp}
\begin{split}
  \mu^*(N) & =\arg\min_{\mu\in\Lambda(\Omega)} H(N,\mu,V^*) \\
  & =-\lambda\tanh\left(\frac{1}{2\lambda}R^{-1}S(N)^T\nabla V^*(N)\right)
\end{split}
\end{equation}
Substituting \eqref{aocp} into \eqref{satU} results in
\begin{equation}\label{satUs}
\begin{split}
  \bar{U}(\mu^*)= &\ \lambda\nabla V^{*T}(N)S(N)\tanh(D^*) \\
  & +\lambda^2\bar{R}\ln(\underline{\mathbf{1}}-\tanh^2(D^*))
\end{split}
\end{equation}
where $D^*=(1/2\lambda)R^{-1}S(N)^T\nabla V^*(N)$, $\underline{\mathbf{1}}$ is a column vector with all elements being ones and $\bar{R}=[\gamma_1,\gamma_2]\in \mathbb{R}^{1\times 2}$.

Hence, combining \eqref{eq:iniTHJB}-\eqref{satUs}, solving the OTPCP of the two-region macroscopic traffic dynamics is equivalent to solving the optimal value function $V^*$ and policy function $\mu^*$ (i.e., $D^*$) from the following tracking HJB equation.
\begin{equation}\label{finHJB}
\begin{split}
  H(N,\mu^*,\nabla V^*)= &\ N^T\hat{Q}N + \nabla V^{*T}(N)F(N) \\
  & + \lambda^2\bar{R}\ln(\underline{\mathbf{1}}-\tanh^2(D^*))=0
\end{split}
\end{equation}

\section{Adaptive optimal tracking perimeter controller design}
\label{sec:apc}

Due to the strong nonlinearity of the tracking HJB equation \eqref{finHJB}, it is extremely difficult to obtain the analytical solution to \eqref{finHJB}. The offline policy iteration method \cite{lewis2009reinforcement} is one of the most common approaches to resolving this difficulty. First, we revisit the offline policy iteration method, based on which the model-free ADP algorithm is derived, to solve the HJB equation \eqref{finHJB}.
The principle of the offline policy iteration method consists of the following two iterative steps to calculate the Bellman equation and the optimal controller:
\begin{enumerate}
  \item (Policy evaluation) Given an initial admissible control policy $\mu^{(0)}(N)$ and initial cost $V^{(0)}=0$, find $V^{(k)}(N)$ successively approximated by solving the following equation
      \begin{equation}\label{eq8}
      \begin{split}
        & N^T\hat{Q}N+U(\mu^{(k)}(N))+ \left(\nabla V^{(k+1)}(N)\right)^{T} \\
        & \times\left(F(N)+S(N)\mu^{(k)}(N)\right)=0,\ k=0,1,\ldots
      \end{split}
      \end{equation}
  \item (Policy improvement) Update the control policy simultaneously by
      \begin{equation}\label{eq9}
      \begin{split}
         \mu^{(k+1)}(N)&=-\lambda\tanh\left(D^{(k+1)}\right) \\
         D^{(k+1)}&=\frac{1}{2\lambda}R^{-1} S(N)^{T} \nabla V^{(k+1)}(N)
      \end{split}
      \end{equation}
\end{enumerate}
where $k$ is the iterative index.
The convergence of sequence $\{(V^{(k)},\mu^{(k)})\}$ (i.e., $\{(V^{(k)},D^{(k)})\}$) to the optimal value and policy function $(V^*,\mu^*)$ (i.e., $(V^*,D^*)$) applying \eqref{eq8}-\eqref{eq9} has been checked by \cite{chen2022data}.

The urban network is subject to inherent network uncertainties, such as uncertain dynamics of heterogeneity and demand variations.
As shown in \autoref{fig:div_demand}, time-series demand is generated from a normal distribution based on various total inputs $N = \{25000, 30000, 32000\}$ (veh).
Scattered plots of MFDs aggregated at 5-minute intervals under different demand patterns are depicted in \autoref{fig:mfds_dmvar}.\footnote{The data is collected from microscopic simulations, which are carried out in an open-source traffic simulator CityFlow developed by \cite{zhang2019cityflow}. Readers are referred to \cite{ma2025calibration} for more details and simulation results.}
We can observe a discernible clockwise hysteresis in all cases, but with different sizes.
These results indicate the impact of travel demand on MFD hysteresis. Hence, the MFD parameters could be time-varying and uncertain \cite{ma2025calibration}, i.e., $F(N)$ and $S(N)$ could be uncertain and even unknown.

\begin{figure}[!h]
    \centering
    \includegraphics[width=3.3in]{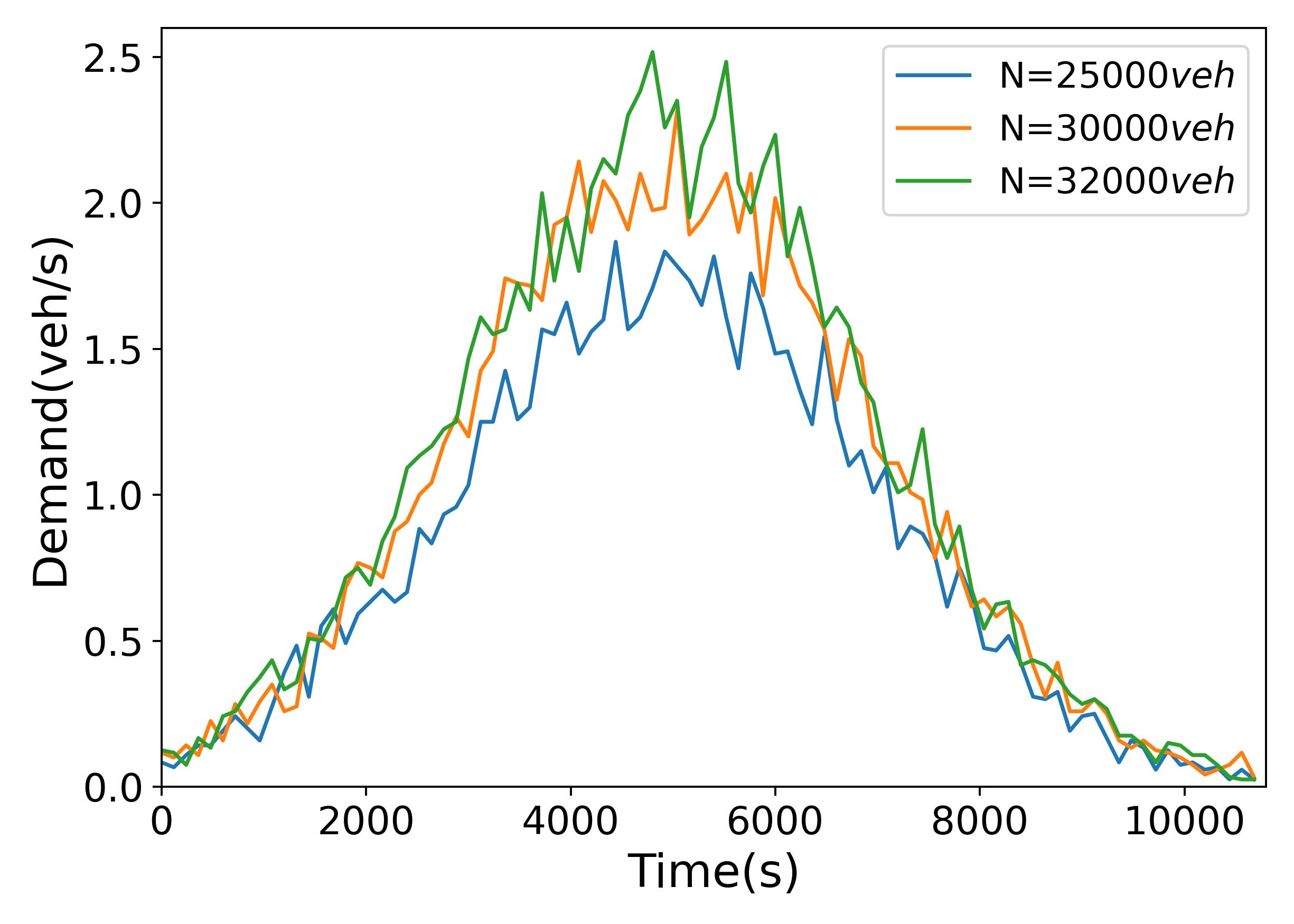}
    \caption{The diverse time-varying demand patterns}
    \label{fig:div_demand}
\end{figure}

\begin{figure}[!h]
    \centering
    \includegraphics[width=3.3in]{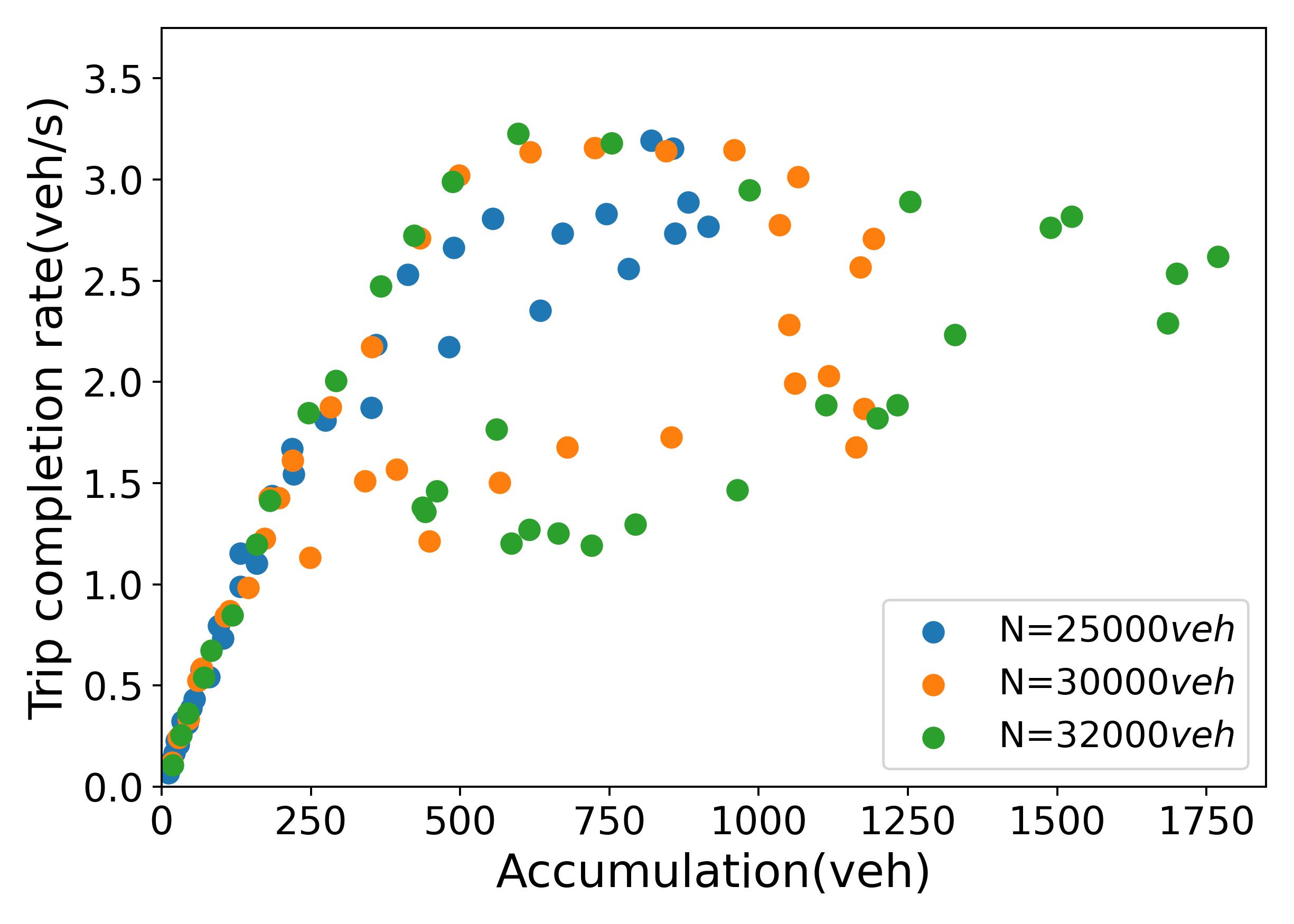}
    \caption{MFDs under demand variations}
    \label{fig:mfds_dmvar}
\end{figure}

Recent advances in ADP-based approaches indicate that model-free methods are compelling for optimizing traffic performance under uncertain and even unknown traffic dynamics \cite{chen2022data,chen2024iterative}.
To implement the model-free method, an improved data-driven algorithm is developed by eliminating the dynamics in the iteration procedure. Denote $\mu^{(k)}$ as the policy to be updated and $\mu$ as the behavior policy that is actually implemented to generate the data for learning. Then we can rewrite the augmented system as:
\begin{equation}\label{opaugsys}
  \dot{N}=F(N)+S(N)\cdot\mu^{(k)}+S(N)\times\left(\mu-\mu^{(k)}\right)
\end{equation}

Taking the derivative of $V^{(k+1)}(N)$ along the system trajectory \eqref{opaugsys} yields
\begin{equation}\label{eq:derv}
\begin{split}
 \frac{\mathrm{d} V^{(k+1)}(N)}{\mathrm{d} t} = & \left(\nabla V^{(k+1)}(N)\right)^{T} \left( F(N)+S(N)\mu^{(k)} \right. \\
 & \left. +S(N)\left(\mu-\mu^{(k)} \right)\right) \\
 = & \left(\nabla V^{(k+1)}(N)\right)^{T} \left( F(N)+S(N)\mu^{(k)} \right) \\
 & + \left(\nabla V^{(k+1)}(N)\right)^{T} S(N)\left(\mu-\mu^{(k)}\right)
\end{split}
\end{equation}

Subtracting \eqref{eq8} from \eqref{eq:derv} and substituting \eqref{eq9} into \eqref{eq:derv}, we have
\begin{equation}\label{itdcost}
\begin{split}
     \frac{\mathrm{d} V^{(k+1)}(N)}{\mathrm{d} t} = & -N^T\hat{Q}N - U(\mu^{(k)}) \\
     & + \left(\nabla V^{(k+1)}\right)^{T} S\left(\mu-\mu^{(k)}\right) \\
     = & -N^T\hat{Q}N - 2\lambda\int^{\mu^{(k)}}_{0} \tanh^{-T}(\frac{\upsilon}{\lambda})R\mathrm{d}\upsilon \\
     & + 2\lambda\tanh^{-T}\left(\frac{\mu^{(k+1)}}{\lambda}\right) R\left(\mu^{(k)}-\mu\right) \\
     = & -N^T\hat{Q}N - 2\lambda\int^{\mu^{(k)}}_{0} \tanh^{-T}(\frac{\upsilon}{\lambda})R\mathrm{d}\upsilon \\
     & - (\nabla V^{(k)})^T S (\mu^{(k)}-\mu) \\
     = & -N^T\hat{Q}N - 2\lambda\int^{\mu^{(k)}}_{0} \tanh^{-T}(\frac{\upsilon}{\lambda})R\mathrm{d}\upsilon \\
     & - 2\lambda (D^{(k+1)})^T R (\mu^{(k)}-\mu)
\end{split}
\end{equation}
Integrating both sides of \eqref{itdcost} over the time interval $[t, t + \Delta t]$, the ADP algorithm for solving the OTPCP is obtained, which is detailed by the so-called integral reinforcement learning (IRL) Bellman equation \eqref{eq10} in \autoref{alg:IRL}.

\begin{algorithm}[H]
    \caption{ADP for optimal tracking perimeter control}
    \label{alg:IRL}
\begin{algorithmic}[1]
  \Require
  initial admissible control policy $\mu^{(0)}(N)$ (i.e., $D^{(0)}(N)$) and initial cost $V^{(0)}=0$
  \Ensure $V^{(k)}(N)$ and $\mu^{(k)}(N)$ (i.e., $D^{(k)}(N)$)
  \State According to the control policy $D^{(k)}$, $D^{(k+1)}$ and $V^{(k+1)}$ can be solved simultaneously as follows:
  \begin{equation}\label{eq10}
  \begin{split}
    & V^{(k+1)}(N(t+\Delta t)) - V^{(k+1)}(N(t)) = \\
    & -\int^{t+\Delta t}_{t} \bigg( N(\tau)^T\hat{Q}N(\tau) \\
    &\ +2\lambda\int^{\mu^{(k)}(\tau)}_{0}(\tanh^{-1}(\upsilon/\lambda))^TR\mathrm{d} \upsilon\bigg)\mathrm{d} \tau \\
    & - 2\lambda\int^{t+\Delta t}_{t}\left(D^{(k+1)}\right)^T R\left(\mu^{(k)}(N(\tau))-\mu(\tau)\right)\mathrm{d}\tau
  \end{split}
  \end{equation}
  \State On convergence, set $V^{(k+1)}(N)=V^{(k)}(N)$ and the optimal control is $D^*=D^{(k+1)}(N)$ (i.e., $\mu^*=\mu^{(k+1)}(N)$).
\end{algorithmic}
\end{algorithm}

From \eqref{eq10}, we can see that the proposed ADP algorithm does not require any information on the system dynamics. $V^{(k+1)}$ and $\mu^{(k+1)}$ are solved simultaneously using only the data collected from the system.
Proposition \ref{prop:adp_con} proves the equivalence between the policy iterative equations \eqref{eq8}-\eqref{eq9} and the IRL Bellman equation \eqref{eq10}.

\begin{prop}\label{prop:adp_con}
    The model-free ADP algorithm per \autoref{alg:IRL} and the model-based policy iteration method \eqref{eq8}-\eqref{eq9} give an equivalent solution to the OTPCP.
\end{prop}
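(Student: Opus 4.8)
The plan is to establish the equivalence in two directions, exploiting that \eqref{eq10} is simply the integrated form of the pointwise iteration \eqref{eq8}--\eqref{eq9} taken along the behavior-policy trajectory \eqref{opaugsys}. First I would fix an admissible policy $\mu^{(k)}$ and recall that the policy-evaluation equation \eqref{eq8} admits a unique ($C^1$, positive definite) solution $V^{(k+1)}$, with $D^{(k+1)}$ then determined by \eqref{eq9}; this pair is the object that the data-driven equation must be shown to reproduce.

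For the forward direction (\eqref{eq8}--\eqref{eq9} $\Rightarrow$ \eqref{eq10}), I would reproduce the computation already carried out in \eqref{eq:derv}--\eqref{itdcost}: differentiate $V^{(k+1)}$ along \eqref{opaugsys} by the chain rule, substitute \eqref{eq8} to eliminate $(\nabla V^{(k+1)})^T(F+S\mu^{(k)})$, and use \eqref{eq9} in the equivalent form $(\nabla V^{(k+1)})^T S = 2\lambda (D^{(k+1)})^T R$ to rewrite the excitation term $S(\mu-\mu^{(k)})$. Integrating the resulting pointwise identity \eqref{itdcost} over $[t,t+\Delta t]$ yields exactly \eqref{eq10}, so the PI pair satisfies the IRL Bellman equation for every $t$ and every $\Delta t$.

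For the backward direction (\eqref{eq10} $\Rightarrow$ \eqref{eq8}--\eqref{eq9}), the plan is to argue that \eqref{eq10} pins down $(V^{(k+1)},D^{(k+1)})$ uniquely, so the forward direction forces any IRL solution to coincide with the PI pair. Dividing \eqref{eq10} by $\Delta t$ and letting $\Delta t\to 0$ recovers, by the fundamental theorem of calculus, the pointwise identity \eqref{itdcost} at $N(t)$; reversing the forward algebra then shows any IRL solution satisfies \eqref{eq8}--\eqref{eq9} along the visited states. To promote this from holding \emph{along the trajectory} to a genuine functional identity, I would impose a persistence-of-excitation / exploration condition on the behavior policy $\mu$ and parametrize $V^{(k+1)}$ and $D^{(k+1)}$ in a suitable basis, so that \eqref{eq10} becomes a linear system in the unknown coefficients whose regressor matrix is full rank; its unique solution must then be the PI pair.

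I expect this backward direction to be the main obstacle. The subtlety is that in the model-free equation \eqref{eq10} the policy term $D^{(k+1)}$ is solved simultaneously with $V^{(k+1)}$ rather than computed from $\nabla V^{(k+1)}$ through \eqref{eq9}, which would require the unknown $S(N)$. One must therefore verify that the bilinear term $2\lambda (D^{(k+1)})^T R(\mu^{(k)}-\mu)$ supplies enough independent information for the measured data to identify $D^{(k+1)}$ separately from the value-function contribution, i.e. that the implicit regression in \eqref{eq10} is nondegenerate under the excitation condition. Establishing this identifiability, rather than the routine integration of the forward direction, is where the real work lies.
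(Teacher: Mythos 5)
Your forward direction is exactly the paper's: the paper likewise disposes of \eqref{eq8}--\eqref{eq9} $\Rightarrow$ \eqref{eq10} by pointing back to the derivation \eqref{eq:derv}--\eqref{itdcost} and integrating over $[t,t+\Delta t]$. The divergence is in the backward direction, and there you and the paper are proving different statements. The paper's proof \emph{assumes} the policy-update relation $D^{(k+1)}=\frac{1}{2\lambda}R^{-1}S^{T}\nabla V^{(k+1)}$ as part of the hypothesis; it then divides \eqref{eq10} by $\Delta t$, lets $\Delta t\to 0$, substitutes that relation, and uses the trajectory identity $\frac{\mathrm{d}V^{(k+1)}}{\mathrm{d}t}=(\nabla V^{(k+1)})^{T}(F+S\mu)$ along \eqref{opaugsys} to cancel terms and land exactly on \eqref{eq8}. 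So the paper's backward implication reads: any pair solving \eqref{eq10} \emph{that is already coupled through \eqref{eq9}} satisfies \eqref{eq8}. That is a short limit-and-substitute calculation, and no excitation, parametrization, or uniqueness argument is invoked anywhere in the paper.

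You instead set out to prove the stronger claim that \eqref{eq10} \emph{alone} pins down the pair $(V^{(k+1)},D^{(k+1)})$, which forces you into persistence of excitation, basis expansion, and a full-rank regressor argument. Your diagnosis of why this would be needed is exactly right: the algorithm solves for $D^{(k+1)}$ and $V^{(k+1)}$ simultaneously precisely because $S(N)$ is unknown, so \eqref{eq9} cannot be enforced directly from data, and identifiability of $D^{(k+1)}$ separately from $V^{(k+1)}$ rests on the richness of the term $2\lambda (D^{(k+1)})^{T}R(\mu^{(k)}-\mu)$. This is the standard PE caveat in the IRL literature, and the paper genuinely glosses over it by folding \eqref{eq9} into the hypothesis of the backward direction. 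However, as written your backward direction is a plan rather than a proof: the nondegeneracy of the implicit regression is stated as the thing to be verified, not verified. So there are two consistent readings. Under the paper's reading of ``equivalent solution'' (with $D^{(k+1)}$ defined through \eqref{eq9} in both schemes), your heavy machinery is unnecessary and your argument collapses to the paper's two-step computation ending at \eqref{eq:prob4}. Under your stronger data-only reading, the proposition requires the PE/full-rank lemma, which neither your proposal nor the paper supplies; you would need to state it as an explicit assumption on the behavior policy $\mu$ and prove the resulting linear system in the unknown coefficients has a unique solution before the equivalence is complete.
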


\begin{proof}
    To prove Proposition \ref{prop:adp_con}, we only need to justify that the IRL Bellman equation \eqref{eq10} gives the same solution to the value function as the Bellman equation \eqref{eq8} and the same updated control policy as \eqref{eq9}.
    The proof is divided into two folds.

    1) \eqref{eq8}-\eqref{eq9} $\Rightarrow$ \eqref{eq10}. Provided that $(V^{(k+1)},\mu^{(k+1)})$ is the solution of the policy iterative equations \eqref{eq8}-\eqref{eq9}, from the derivation of \eqref{eq10}, one can easily deduce that $(V^{(k+1)},\mu^{(k+1)})$ is the solution of \eqref{eq10}.

    2) \eqref{eq10} $\Rightarrow$ \eqref{eq8}-\eqref{eq9}. Provided that $(V^{(k+1)},\mu^{(k+1)})$ is the solution of the IRL Bellman equation \eqref{eq10} and that $D^{(k+1)}=\frac{1}{2\lambda}R^{-1} S^{T} \nabla V^{(k+1)}$.

    Dividing both sides of \eqref{eq10} by $\Delta t$ and taking limits, i.e.,
    \begin{equation*}
    \begin{split}
        & \lim_{\Delta t\rightarrow 0} \frac{V^{(k+1)}(N(t+\Delta t))-V^{(k+1)}(N(t))}{\Delta t} \\
        = & \lim_{\Delta t\rightarrow 0} -\frac{\int^{t+\Delta t}_{t} (N^T\hat{Q}N + 2\lambda \int^{\mu^{(k)}}_{0} \tanh^{-T}(\frac{v}{\lambda})R\mathrm{d}v)\mathrm{d}\tau}{\Delta t} \\
        & - \lim_{\Delta t\rightarrow 0} \frac{2\lambda\int^{t+\Delta t}_{t} (D^{(k+1)})^T R(\mu^{(k)}-\mu)\mathrm{d}\tau}{\Delta t}
    \end{split}
    \end{equation*}
    we have
    \begin{equation}\label{eq:prop1}
    \begin{split}
        \frac{\mathrm{d} V^{(k+1)}}{\mathrm{d} t} = & -N^T\hat{Q}N - 2\lambda \int^{\mu^{(k)}}_{0} \tanh^{-T}(\frac{v}{\lambda})R\mathrm{d}v \\
        & - 2\lambda (D^{(k+1)})^T R(\mu^{(k)}-\mu)
    \end{split}
    \end{equation}

    Substituting $D^{(k+1)}=\frac{1}{2\lambda}R^{-1} S^{T} \nabla V^{(k+1)}$ into \eqref{eq:prop1}, one obtains
    \begin{equation}\label{eq:prop2}
    \begin{split}
        \frac{\mathrm{d} V^{(k+1)}}{\mathrm{d} t} = & -N^T\hat{Q}N - 2\lambda \int^{\mu^{(k)}}_{0} \tanh^{-T}(\frac{v}{\lambda})R\mathrm{d}v \\
        & - (\nabla V^{(k+1)})^T S (\mu^{(k)}-\mu)
    \end{split}
    \end{equation}

    Combining \eqref{opaugsys} and \eqref{eq:prop2}, it follows that
    \begin{equation*}
    \begin{split}
        \frac{\mathrm{d} V^{(k+1)}}{\mathrm{d} t} = & -N^T\hat{Q}N - 2\lambda\int^{\mu^{(k)}}_{0} \tanh^{-T}(\frac{\upsilon}{\lambda})R\mathrm{d}\upsilon \\
        & + (\nabla V^{(k+1)})^T (F + S\mu) \\
        & - (\nabla V^{(k+1)})^T (F + S\mu^{(k)})
    \end{split}
    \end{equation*}
    Then we have
    \begin{equation}\label{eq:prob3}
    \begin{split}
        & -N^T\hat{Q}N - 2\lambda\int^{\mu^{(k)}}_{0} \tanh^{-T}(\frac{\upsilon}{\lambda})R\mathrm{d}\upsilon \\
        & - (\nabla V^{(k+1)})^T (F + S\mu^{(k)}) \\
        = & \frac{\mathrm{d} V^{(k+1)}}{\mathrm{d} t} - (\nabla V^{(k+1)})^T (F + S\mu)
    \end{split}
    \end{equation}

    Note from \eqref{eq:derv} that $\frac{\mathrm{d} V^{(k+1)}}{\mathrm{d} t} = (\nabla V^{(k+1)})^T (F + S\mu)$. Therefore, the right side of \eqref{eq:prob3} equals 0. Hence,
    \begin{equation}\label{eq:prob4}
    \begin{split}
        & -N^T\hat{Q}N - 2\lambda\int^{\mu^{(k)}}_{0} \tanh^{-T}(\frac{\upsilon}{\lambda})R\mathrm{d}\upsilon \\
        & - (\nabla V^{(k+1)})^T (F + S\mu^{(k)}) \\
        = & -N^T\hat{Q}N - U(\mu^{(k)}) -  \left(\nabla V^{(k+1)}\right)^{T}  (F+S\mu^{(k)})\\
        = & 0
    \end{split}
    \end{equation}
    \eqref{eq:prob4} is the same as \eqref{eq8}.

    This completes the proof.
\end{proof}

By iterating $V^{(k)}$ on the IRL Bellman equation and updating the control policy $\mu^{(k)}$, we can approach both the optimal value function $V^*$ and the optimal tracking perimeter control policy $\mu^*$.

\section{Experimental results} \label{sec:exp}

To show the validity of the proposed ADP algorithm for optimal tracking perimeter control of the two-region MFD system, we provide two illustrative examples. In both examples, the MFD functions for the two regions are assumed to be the same, which are in line with those in \cite{haddad2015robust}, i.e.,
\begin{equation}\label{eq:Ghaddad}
  G_{i}(n_{i}) = \frac{1.4877\cdot 10^{-7} n^3_{i} - 2.9815\cdot 10^{-3} n^2_{i} + 15.0912 n_{i}}{3600}
\end{equation}
For both regions, based on \eqref{eq:Ghaddad}, the jam accumulation is $n^{jam}_i=10000$ (veh), the maximum trip completion rate (i.e., the maximum throughput) is $G^{\max}_{i}=6.3$ (veh/s), and the according critical accumulation state is $n^{cr}_i=3392$ (veh). The sample time interval is $60$ s.

\subsection{Example 1: Time-dependent set-point tracking}

In Example 1, we mimic a realistic scenario of peak-hour traffic. For different periods of peak-hour traffic (e.g., congestion onset, stationary congestion, and congestion dissolving), the desired control targets should be different and fit the dynamic nature of the travel demand. For the first hour, the travel demand is at a medium level, i.e., $q(t) = \col(1.2, 1.6, 1.0, 1.4)$ (veh/s) and the set points are $[n^*_1, n^*_2] = [2000, 2000]$ (veh), i.e., around $60\%$ of the critical accumulation states. For the next 2.5 hours with stationary congestion, the travel demand is $q(t) = \col(1.6, 1.6, 1.6, 1.6)$ (veh/s) and the set points are set to be $[n^*_1, n^*_2] = [3000, 3000]$ (veh), i.e., around $90\%$ of the critical accumulation states. After that, as the congestion dissolves, the travel demand decreases to $q(t) = \col(0.9, 0.9, 0.9, 0.9)$ (veh/s), and the set points are modified to a much lower level, i.e., $[n^*_1, n^*_2] = [1500, 1500]$ (veh). The corresponding equilibrium points of the accumulation and the perimeter control input are given in \autoref{tb:eqe1}. The initial OD-specific initial accumulations are $n_{11}(0) = 450$ (veh), $n_{12}(0) = 1050$ (veh), $n_{21} = 1750$ (veh), $n_{22}(0) = 750$ (veh).

\begin{table}[!htb]
  \centering
  \caption{The equilibrium points of Example 1}
  \label{tb:eqe1}
  \begin{tabular}{|c!{\vrule width 1.5pt} l !{\vrule width 1.5pt}l|}
    \hline\hline
    Time           & $[n^*_{11},n^*_{12},n^*_{21},n^*_{22}]$ & $[u^*_{12},u^*_{21}]$ \\
    \hline
    0:00-1:00      & $[814.5, 1185.5, 889.3, 1110.7]$ & $[0.50, 0.42]$ \\
    \hline
    1:01-3:30      & $[1538.9, 1461.1, 1461.1, 1538.9]$ & $[0.53, 0.53]$ \\
    \hline
    3:31-5:00      & $[591.6, 908.4, 908.4, 591.6]$ & $[0.33, 0.33]$ \\
    \hline\hline
  \end{tabular}
\end{table}

In Example 1, we compare the performance of the proposed ADP-based TPC against that of the ADP-based SPC. The SPC aims to track a fixed set point $[n^*_1, n^*_2] = [3000, 3000]$ (veh) regardless of the changes in the demand pattern. The results of the accumulation state evolution and OD-specific state evolution of Example 1 are shown in \autoref{fig:state2e2} and \autoref{fig:state4e2}, respectively. The results indicate that the proposed ADP-based TPC scheme can adapt to the changes of the travel demand pattern and regulate the accumulation states (black solid lines) to the corresponding set points (blue dotted lines). Instead of tracking the time-dependent reference trajectory, the SPC scheme succeeds in stabilizing the accumulation states to a fixed set-point in this time-varying demand case (red solid lines).

\begin{figure}[htbp]
  \centering
  \includegraphics[width=3.3in]{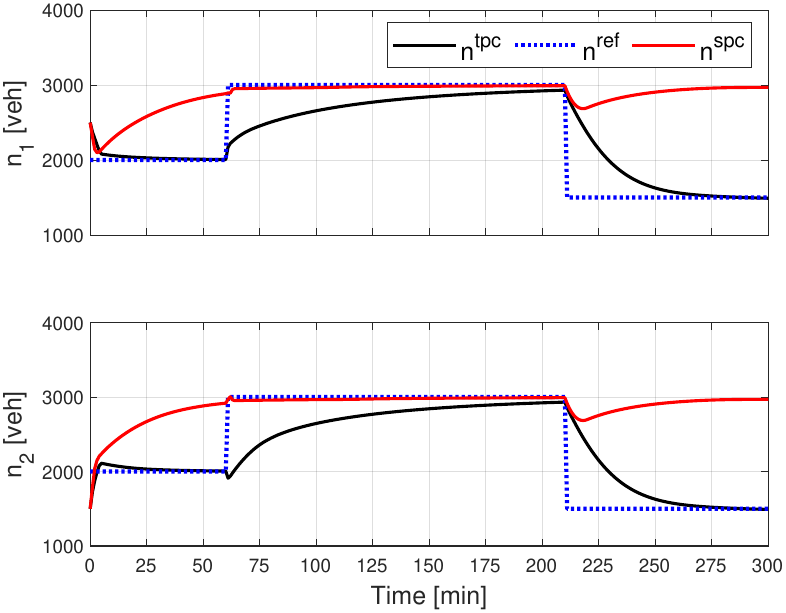}\\
  \caption{Accumulation state evolutions of Example 1}\label{fig:state2e2}
\end{figure}

\begin{figure}[htbp]
  \centering
  \includegraphics[width=3.3in]{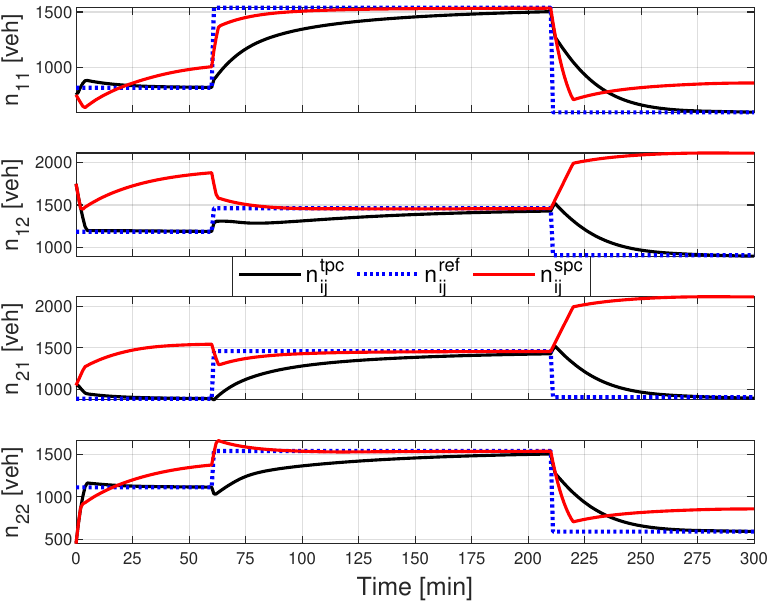}\\
  \caption{OD-specific state evolutions of Example 1}\label{fig:state4e2}
\end{figure}

A summary of two important performance indices: 1) minimizing the total time spent (TTS) and 2) maximizing the cumulative trip completion (CTC), achieved by the two controllers is given by \autoref{tb:pce1}.
The performance comparison is depicted in \autoref{fig:ttsctce2}.
Note that the proposed TPC scheme achieves a 20.01\% reduction in TTS compared with the SPC scheme. When performing congestion offset, the SPC scheme restricts the transfer flows between the two regions to keep regulating the accumulation state to the critical point, which makes the network denser and leads to unnecessary travel delays. On the contrary, by defining a reference signal that is bounded by more practical control targets, the proposed TPC scheme not only achieves a significant improvement in minimizing the TTS compared with the SPC strategy, but also can outperform the latter in facilitating the CTC.

\begin{table}[!htb]
  \centering
  \caption{Performance in TTS (veh$\cdot$s) and CTC (veh) of Example 1}
  \label{tb:pce1}
  \begin{tabular}{|c!{\vrule width 1.5pt}c!{\vrule width 1.5pt}c|}
    \hline\hline
    Controller     & TTS ($\times 1e7$ veh$\cdot$s) & CTC ($\times 1e4$ veh) \\
    \hline
    TPC      & 8.311 (-20.01\%) & 9.698 (+3.15\%) \\
    \hline
    SPC      & 10.391 (-) & 9.402 (-) \\
    \hline\hline
  \end{tabular}
\end{table}

\begin{figure}[htbp]
  \centering
  \includegraphics[width=3.3in]{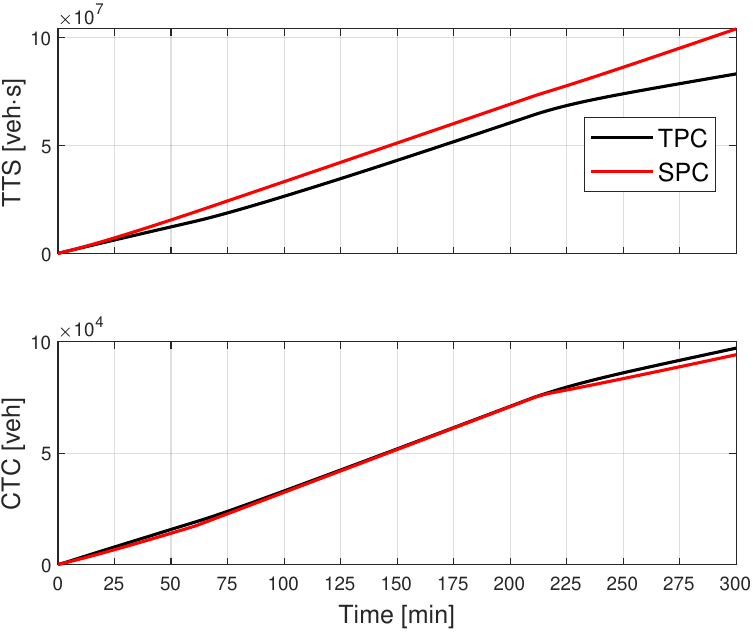}\\
  \caption{Performances in TTS and CTC of Example 1}\label{fig:ttsctce2}
\end{figure}

\subsection{Example 2: Robust trajectory tracking and trip completion maximization}

Note that the accumulation reference trajectories in Example 1 are a priori given constant accumulation points.
Now in this case study, we examine the proposed ADP tracking perimeter controller for time-varying accumulation reference trajectories.
We consider a practical urban traffic operation target, that is, to maximize the trip completion under a time-varying demand pattern.
The desired trajectory reference is given by
\begin{equation}\label{eq:ex22traj}
    \begin{split}
        \dot{n}_{d,1}(t) = \hat{q}_1(t) - G_1(n_{d,1}(t)) + \frac{n_{d,21}(t)}{n_{d,2}(t)}G_2(n_{d,2}(t)) \\
        \dot{n}_{d,2}(t) = \hat{q}_2(t) - G_2(n_{d,2}(t)) + \frac{n_{d,12}(t)}{n_{d,1}(t)}G_1(n_{d,1}(t))
    \end{split}
\end{equation}
where the initial reference signal is $n_{d,1}(0)=0,\ n_{d,2}(0)=0$.
The initial OD-specific accumulations of the original MFD system are $n_{11}(0) = 120$ (veh), $n_{12}(0) = 280$ (veh), $n_{21}(0) = 385$ (veh), $n_{22}(0) = 165$ (veh).
The nominal travel demand pattern desired or estimated by the traffic manager is shown in \autoref{fig:demand4e22}.
Based on the desired demand pattern, the traffic manager devises the trajectory reference given by \eqref{eq:ex22traj}.
However, the actual demand pattern can vary daily and is subject to external disturbances.
Thus, the actual demand pattern differs from the desired one, as shown in \autoref{fig:demand4e22_ns}, mimicking a one-hour peak period followed by a low-demand period.

\begin{figure}[htbp]
  \centering
  \includegraphics[width=3.3in]{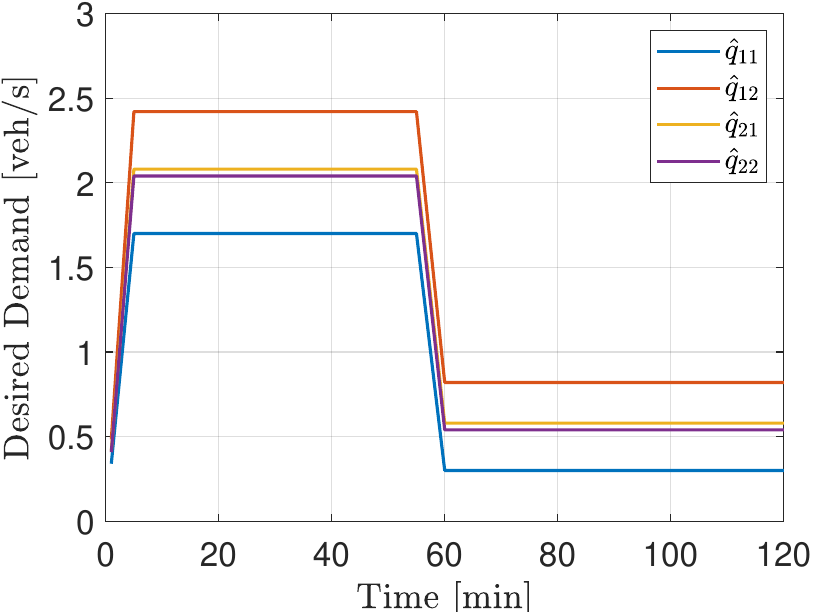}\\
  \caption{The nominal demand pattern}\label{fig:demand4e22}
\end{figure}

\begin{figure}[htbp]
  \centering
  \includegraphics[width=3.3in]{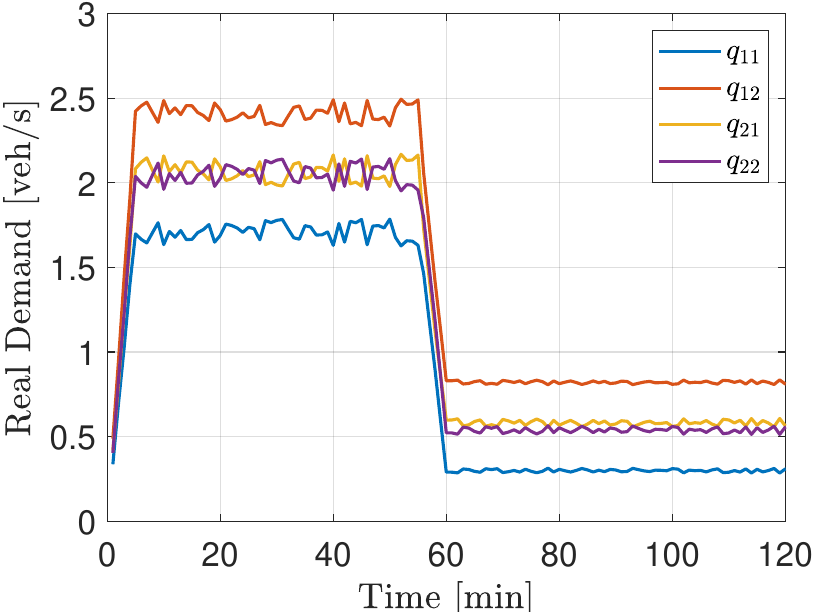}\\
  \caption{The actual noisy demand pattern}\label{fig:demand4e22_ns}
\end{figure}

The accumulation state evolution results are presented in \autoref{fig:state4e22}.
We can observe that the accumulation state can be successfully regulated to the desired reference trajectory, and the trip completion of the system is thus maximized.
The perimeter control inputs over time are shown in \autoref{fig:controle22}.
As the actual initial state is larger than the initial reference signal, the adaptive TPC metering rate first drops to around 60\% in order to track the reference trajectory.
Some chattering behavior is observed in the control input evolution. This is because the adaptive TPC controller is trying to suppress the fluctuation effect of demand disturbance on the traffic state so as to keep track of the stable reference trajectory.
The perimeter control inputs eventually approach their maximal values, which implies that almost no queuing occurs on the region boundaries and the optimal trip completion rate is achieved.
These results indicate that the proposed adaptive TPC approach can well learn the uncertain traffic dynamics and hence guarantee the robustness of the controller and the optimality of the MFD system.

\begin{figure}[htbp]
  \centering
  \includegraphics[width=3.3in]{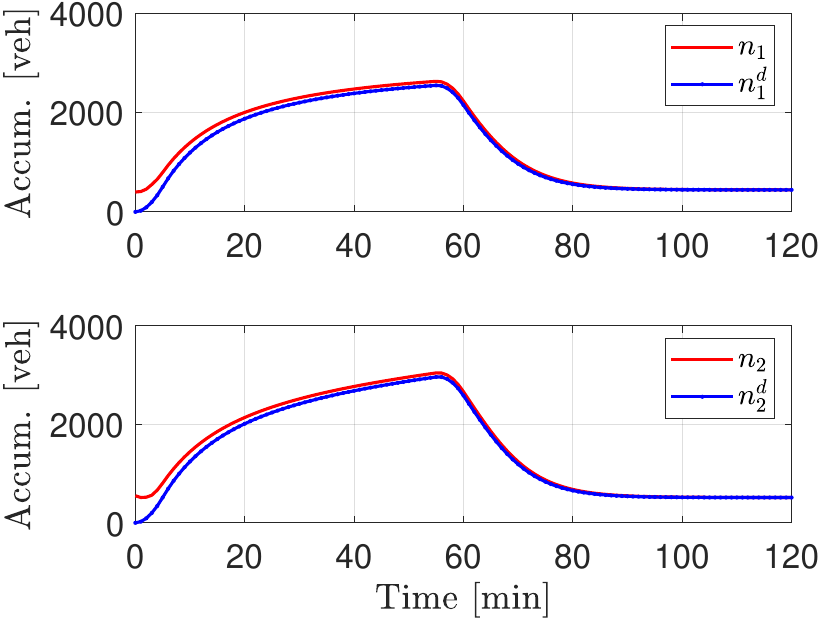}\\
  \caption{Accumulation state evolutions of Example 2}\label{fig:state4e22}
\end{figure}

\begin{figure}[htbp]
  \centering
  \includegraphics[width=3.3in]{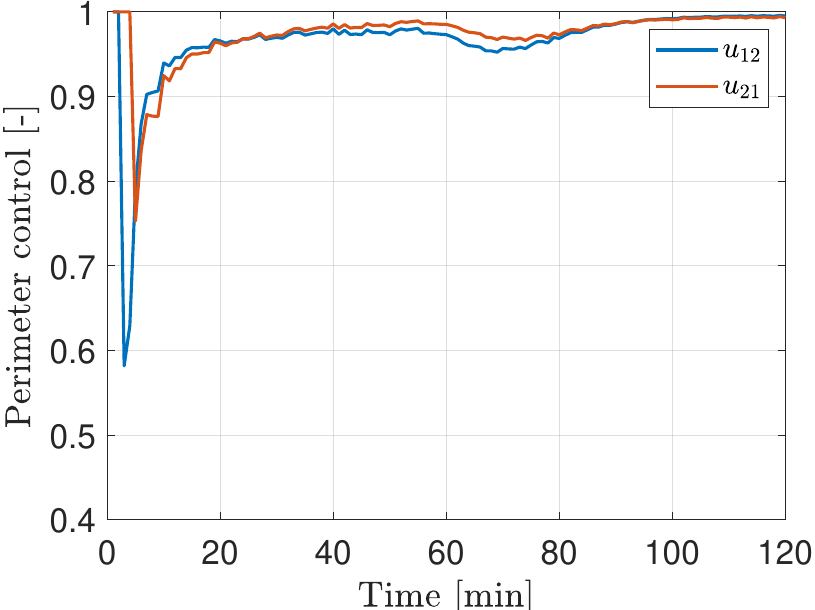}\\
  \caption{Perimeter control inputs of Example 2}\label{fig:controle22}
\end{figure}

\section{Conclusions}\label{sec:con}

This paper leverages the trajectory stability concept that can better fit the dynamic nature of traffic demand to devise an adaptive tracking perimeter control (TPC) strategy for two-region macroscopic traffic dynamics.
An adaptive dynamic programming (ADP) approach is proposed to approximate the optimal solution to the OTPCP, which requires no knowledge of the macroscopic traffic dynamics, i.e., model-free.
Compared with the traditional set-point perimeter control (SPC) scheme that tracks a pre-defined set point, the proposed ADP-based TPC scheme can well adapt to the changes in the traffic condition (e.g., time-varying travel demand) and regulate the accumulation state to a desired reference trajectory that better fits the dynamics of the demand.
Moreover, in cases of unknown traffic dynamics and demand disturbances, the adaptive TPC approach can track the optimal reference trajectory to maximize the trip completion under a nominal demand pattern, which indicates the robustness of the proposed tracking perimeter controller.
Future efforts will be dedicated to 1) defining reference trajectories from the perspective of economic benefit in traffic management, and 2) incorporating the route choice behaviors of travelers.

\balance

\bibliographystyle{IEEEtran}
\bibliography{IEEEabrv,IEEEexample}

\end{document}